\newtheorem{theorem}{Theorem}
\newtheorem{lemma}{Lemma}
\newtheorem{corollary}{Corollary}
\newtheorem{definition}{Definition}
\newtheorem{remark}{Remark}
\newtheorem{example}{Example}
\newtheorem{claim}{Claim}
\newcommand{\comment}[1]{}
\def\cS{\mbox{$\cal{S}$}}
\def\cD{\mbox{$\cal{D}$}}
\newcommand{\TRS}{X{}} 
\newcommand{\ND}[1]{{\overline{D}_{\tilde{#1}}}}
\newcommand{\Nd}[1]{{\overline{d}_{\tilde{#1}}}}
\newcommand{\NS}[1]{{\overline{S}_{\tilde{#1}}}}
\newcommand{\Ns}[1]{{\overline{s}_{\tilde{#1}}}}
\newcommand{\Rt}{R^p(N,K,M)} 
\newcommand{\Rto}{R^{*p}(N,K,M)} 
\newcommand{\Rm}{R^{*}(N,K,M)} 
\newcommand{\Rtc}{R^{p}_{c}(N,K,M)}
\title{Demand Private Coded Caching}
\author{
	\IEEEauthorblockN{Sneha~Kamath}
	\IEEEauthorblockA{ Qualcomm, India. Email: snehkama@qti.qualcomm.com}
	\thanks{This work was done while the author was at IIT Bombay. She acknowledges fruitful discussions with Jithin Ravi and Bikash Kumar Dey.}
	} 
\begin{document}
  	 \IEEEoverridecommandlockouts
 \maketitle
 \begin{abstract} 
 	The work by Maddah-Ali and Niesen demonstrated the benefits in reducing the transmission rate in a noiseless broadcast network by joint design of caching and delivery schemes. In their setup, each user learns the demands of all other users in the delivery phase. In this paper, we introduce the problem of {\em demand private coded caching} where we impose a privacy requirement that 
 	no user learns any information about the demands of other users. We provide an achievable scheme and compare its performance using the existing lower bounds on the achievable rates under no privacy setting. For this setting, when $N\leq K$ we show that our scheme is order optimal within a multiplicative factor of 8. Furthermore, when $N > K$ and $M\geq N/K$, our scheme is order optimal within a multiplicative factor of 4.
 \end{abstract}

 \section{Introduction}
 To reduce the network traffic at peak hours, some of the popular contents are stored at users' end during off-peak hours.
Such techniques are called coded caching~\cite{Maddah16}, and it is studied from an information theoretic perspective recently.
Coded caching was studied in~\cite{Maddah14} for a broadcast network with one server and many users. In their setting, the server has $N$ files $W_i, i \in [N]:= \{1,\ldots,N\}$, and it can broadcast to $K$ users  through a noiseless broadcast link. 
Each user can store  $M$ out of $N$ files, where $M$ varies from 0 to $N$.
Each user requests one of the  $N$ files in the delivery phase and the demand of each user is conveyed to the server. 
The server broadcasts a message to meet the demands of all the users. 
For a set of demands, the server chooses an encoding function.
This particular choice of of function is conveyed in the broadcast phase to all users.
Each user's decoding function is chosen based on the encoding function.
 Thus, each user learns the demands of all other users. 
This violates the privacy of users. So in this paper, we study the coded caching problem with the additional privacy constraint that each user should not learn any information about the demands of other users. For this setup, we address the trade-off between transmission rate and cache size. We call this as {\em demand private coded caching}.

The phase in which the server stores a fraction of all the files in the caches of users is called as placement phase, and the phase in which the server broadcasts is called as delivery phase~\cite{Maddah14}. Each user decodes the demanded file using the cache content and the broadcast message received. It is assumed that all the files are of equal size and the transmission rate is the number of bits transmitted per size of one file. For any $N$ and $K$, an achievable scheme is provided for each cache 
size $M$, and its order optimality is shown to be within a multiplicative factor of 12. Improved achievable rates were obtained  in~\cite{Amiri17, Zhang18, Vilardebo18, Yu18} and the lower bounds on the achievable rates were improved in~\cite{Ghasemi17,Wang18}.
The optimal rates when the cache content is not allowed to be coded were characterized in~\cite{Yu18, Wan16}.

Security aspects in coded caching have been considered under information theoretic security in some recent works~\cite{Sengupta15}. Security against a wiretapper who observes the broadcasted message has been studied in~\cite{Sengupta15}. In this work, the security is obtained by distributing {\em keys} among  users. They provided an approximate characterization of memory-rate trade-off for this problem. 
Privacy aspects have also been studied in coded caching~\cite{Ravindrakumar18} in the following sense. Each user should not obtain any information about any other file than the requested one. A feasible scheme has been proposed by distributing keys in the placement phase, and the order optimality of this scheme is shown to be within a constant factor~\cite{Ravindrakumar18}. 

In this paper, we study the coded caching setup introduced by Maddah-Ali and Niesen~\cite{Maddah14} under the constraint that no user learns any information about the demands of the other users.  For $N$ files and $K$ users,  we obtain an achievable scheme from a coded caching scheme for $N$ files $NK$ users with no privacy requirement. The  memory-rate pair achieved by our scheme is given in Theorem~\ref{Thm_genach}.
Since the coded caching with demand privacy is a more constrained problem compared to the problem without the privacy requirement, any converse for the problem under no privacy also holds as a converse for the problem with the privacy requirement. By comparing our achievable memory-rate pair with the lower bounds on the achievable   memory-rate pair under no privacy, in Theorem~\ref{Thm_tightness}, we show that our scheme is order optimal within a 
multiplicative factor of 8 when $N\leq K$. We also show that when $N > K$, our scheme is order optimal within a multiplicative factor of 4 if $M \geq N/K$.


Recently, an idependent and parallel work~\cite{Wan19} on demand privacy for coded caching was posted on arXiv on 28 August 2019. The problem formulation of~\cite{Wan19} is very similar to ours.
 We studied the problem of demand privacy for coded caching for the case of single request from users, the setup studied in~\cite{Maddah14}.
The authors in~\cite{Wan19} studied the cases of single request as well as multiple requests from the users. So it is good to compare our results with theirs for the single request case.
For the single request case, the achievable schemes in these two works are very different. 
In both of these works, the  tightness of the achievable rates are shown  by comparing it with the existing converse results on coded caching problem under no privacy requirement. 
The general scheme in~\cite{Wan19} is shown to be order optimal when $M\geq N/2$. 
In contrast, we show that our scheme is order optimal within a factor of 8 when $N \leq K$.
A special case where all users demand distinct files was also studied in~\cite{Wan19}. This assumption implies that $N\geq K$. They have provided an improved scheme for this special case which is order optimal within a factor of $4$ when $M\geq N/K$. We show that when $N>  K$, our scheme is order optimal within a factor of $4$ when $M \geq N/K$ without any restriction on the demand vectors.

The paper is organized as follows. We give our problem formulation in Section~\ref{sec_problem}. Our results are presented in Section~\ref{sec_results}, and the proofs are given in Section~\ref{sec_proofs}. We conclude the paper  in Section~\ref{sec_discuss} with some simulation results.

 \section{Problem formulation and definitions}
 \label{sec_problem}

A server with $N$ files is connected to $K$ users via a noiseless broadcast link. 
We assume that $N$ files $W_1, W_2,\ldots,W_N$ are independent, and each file $W_i$ is of length $F$ bits and takes values uniformly in the set $[2^F] := \{1,\ldots, 2^F\}$. 
User $k$ has a local cache $Z_k$ 
of size $MF$ bits for some real number $M \in [0,N]$. In the placement phase, the server fills the cache of each user. In the delivery phase,
each user $k$ demands file $W_{D_k}$ from the server, where  $D_{k} \sim \text{Unif}\{[N]\}  \; \forall   k \in [K] $.   Let  $\overline{D}=\{D_{1},D_{2},...,D_{K}\}$ denote the demands of all users, and let  $\ND{k}$ denote all demands, but $D_k$, i.e., $\ND{k}= \overline {D} \setminus \{ D_{k}\}$.  All users convey their demands to the server without revealing it to other users. Then, the server broadcasts a message of size $RF$ bits to all the users. Here $R$ is a real number, and is defined as the rate of transmission. Using the received message and the cache content, each user $k$ recovers file $W_{D_k}$.

In addition to the recovery requirement, we have a privacy requirement that user $k$ should not gain any information about $\ND{k}$. 
To achieve this, we consider some {\em shared randomness} $S_k$ which is shared between user $k$ and the server, and  it is not known to the other users.
The shared randomness can be achieved during the placement phase since the placement is done secretly for each user. Random variables $S_1,\ldots, S_K$ take values in some finite alphabets $\cS_1,\ldots, \cS_K$, respectively.
The set of random variables $(S_1, \ldots, S_{K})$ is denoted by $\overline{S}$.
The random variables $ \{S_k: 1\leq k \leq K \} \cup  \{D_k: 1\leq k \leq K \} \cup \{W_i: 1\leq i \leq N \}$  are independent of each other.

\begin{figure}[htb]
  \centering
   \includegraphics[scale=0.5]{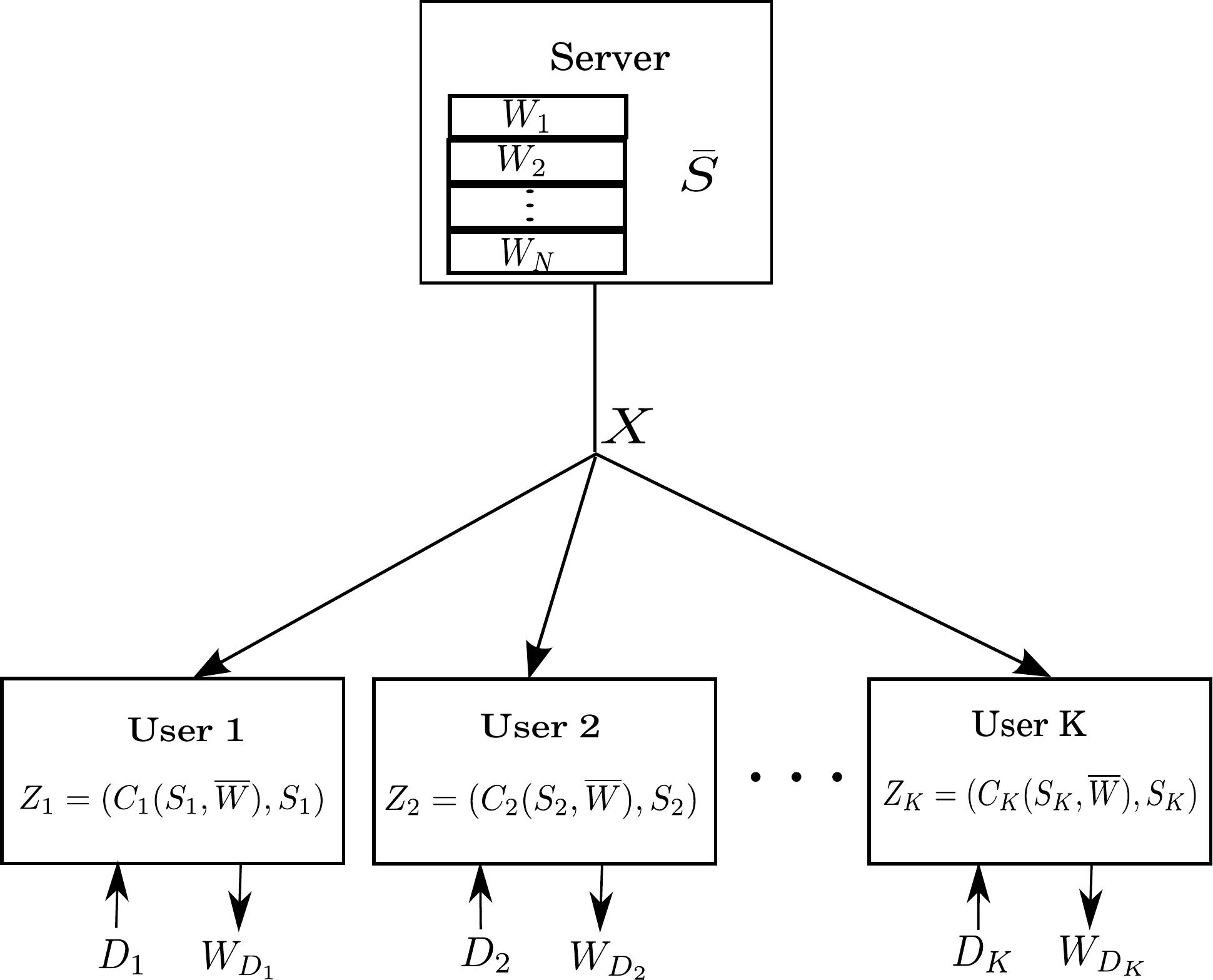}
  \caption{Coded caching model}
\end{figure}

A caching scheme consists of $K$  cache encoding functions, the transmission encoding function, and $K$ decoding functions.
For $k \in [K]$, the cache encoding function for the $k$-th user is given by
\begin{align}
C_{k} :\cS_k \times {[2^F]}^N  \rightarrow [2^{MF}], \label{Def_cach_enc}
\end{align} 
and the cache content $Z_k$ is given by $Z_k =(C_k(S_k, \overline{W}), S_k)$. The   transmission
encoding function
\begin{align}
E: {[2^F]}^N \times \cD_1 \cdots \times \cD_N \times \cS_1 \times \cdots \times \cS_K \rightarrow [2^{RF}], \label{Def_Tx_enc}
\end{align}
outputs the transmitted message $X=E(\overline{W}, \overline{D}, \overline{S})$. And finally, user $k$ has decoding function
\begin{align}
G_{k} : \cD_k \times \cS_k \times 2^{RF} \times 2^{MF}  \rightarrow [2^{F}], \label{Def_dec}
\end{align}
which recovers user $k$'s demanded file $W_{D_{k}}$.
A memory-rate pair $(M,R)$ is said to be {\em achievable} with demand privacy if
a scheme satisfies  the following two conditions:
\begin{align} 
H\left(W_{D_{k}}|Z_k,\TRS, D_k\right)  = 0 \quad \forall k=1,\ldots,K, \label{Eq_dec_cond}
\end{align}
and
\begin{align}
 & I\left(\ND{k};Z_k,\TRS, D_k \right)  = 0 \quad \forall   k=1,\ldots,K. \label{Eq_instant_priv}
\end{align}
For $N$ files and $K$ users and memory $M$, memory-rate trade-off is defined as 
\begin{align}
R^{*p}(N,K,M)&=\inf\{R:(M,R) \mbox{ is achievable with demand privacy}\}. \label{Eq_opt_rate_priv}
\end{align}

A memory-rate pair $(M,R)$ is achievable under no  privacy if it satisfies~\eqref{Eq_dec_cond}. In this case, the memory-rate trade-off is defined as
\begin{align}
R^{*}(N,K,M) &=\inf\{R:(M,R) \mbox{ is achievable} \}.\label{Eq_opt_rate_nopriv}
\end{align}

 \section{Main results}
 \label{sec_results}
%
%
%
%


First let us consider a trivial scheme which achieves demand privacy. In the caching phase, caches of all users are filled with the same $M$ out of $N$ files. Then in the broadcast phase, 
the rest of $N - M$ files are transmitted. In this scheme, all users can obtain all files, and the rate of transmission is $N-M$. It is easy to see that this scheme satisfies~\eqref{Eq_dec_cond} and~\eqref{Eq_instant_priv}. Next we give a coded caching instance for which we give a scheme with rate strictly less than $N-M$ and it satisfies~\eqref{Eq_dec_cond} and~\eqref{Eq_instant_priv}. Let us consider a simple example with	 $N=K=2$ and $M=1$.
\begin{example}
	\label{Ex_simple}
Let $A$ and $B$ denote the files $W_1$ and $W_2$, respectivley. For $F=6$, next we show a scheme which achieves demand privacy~\eqref{Eq_instant_priv} . Let us denote the 6 bits of $A$ and $B$ by $A_1,A_2,A_3,A_4,A_5,A_6$ and $B_1,B_2,B_3,B_4,B_5,B_6$, respectively. Each user stores 3 bits of each file. In the caching phase, there are two choices for the cache content for each user. 
Let $S_i,i=1,2$ be  uniformly distributed in $\{1,2\}$. Then the cache content of user $i$, denoted by $Z_{iS_i}$, is given in  Table~\ref{Table_cache_NK2}. 
The realization of $S_i,i=1,2$ is known only to user $i$ and the server.

After knowing the demand requests, in the delivery phase, the server chooses one of the 4 transmissions $T_i, i=1,2,3,4$ given in Table~\ref{Tab_Tx_NK2}.
The choice is determined by the random variables $S_1,S_2$ and $D_1,D_2$ and it is given in Table~\ref{Tab_Tx_NK2}. For example, if $S_1= 1, S_{2}=2$ and $D_1=1, D_2=1$, then the server chooses transmission $T_2$. Using Tables~\ref{Table_cache_NK2} and~\ref{Tab_Tx_NK2}, it is easy to verify that this scheme satisfies the decoding condition~\eqref{Eq_dec_cond}. Now let us see how this scheme achieves demand privacy.  Let us consider the case of $S_1= 1, S_{2}=2$ and $D_1=1, D_2=1$. Then from the point of view of user 1, with transmission $T_2$ user 2 may be decoding file $A$ using cache $Z_{22}$ or he may be decoding file $B$ using cache $Z_{21}$.
User 2 also has similar inference. As we can observe that not revealing the caching function along with  the clever choice of transmission preserves the demand privacy against the other user. In this example, each of the transmission consists of 4 bits. Since $F=6$, this scheme achieves a rate $R = \frac{4}{6} = \frac{2}{3}$. In contrast, the optimal rate wihout any privacy requirement for this example is shown~\cite{Maddah14} to be $\frac{1}{2}$.
\end{example}


 \begin{table}[h]
	\centering
	\begin{tabular}{|c|c|c|c|c|c|c|}
		\hline
		Cache & \multicolumn{6}{c|}{Cache content}\\
		\hline
		$Z_{11}$ & $A_1$ & $A_2$ & $A_3$ & $B_1$ & $B_2$ & $B_3$ \\
		$Z_{12}$ & $A_1$ & $A_4$ & $A_5$ & $B_1$ & $B_4$ & $B_5$ \\
		$Z_{21}$ & $A_2$ & $A_4$ & $A_6$& $B_2$&$ B_4$& $B_6$ \\
		$Z_{22}$ & $A_3$ & $ A_5 $& $A_6$ &$ B_3 $& $B_5 $&$ B_6$ \\
		\hline
	\end{tabular}
	\caption{Choices for the caches of user 1 and 2}
	\label{Table_cache_NK2}
\end{table}
 
 \begin{table}[h]
	\centering
	\begin{tabular}{|c|c|c|c|c|c|}
		\hline
		$S_1$ & $S_2$ & $D_1$ & $D_2$ & Transmission index & Transmissions\\
		\hline
		$1$ & $1$ & $1$ & $1$ &   & $ B_2 \oplus A_1 \oplus A_4 $ \\
		$1$ & $2$ & $1$ & $2$ & $T_1$  & $B_4 \oplus B_6 \oplus A_5$ \\
		$2$ & $1$ & $2$ & $1$ &   & $ B_2 \oplus A_6 \oplus A_3$\\
		$2$ & $2$ & $2$ & $2$ &   & $ B_1 \oplus A_5 \oplus B_3$\\
		\hline
		$1$ & $1$ &$1$ & $2$&   & $B_1 \oplus A_4 \oplus B_2 $ \\
		$1$ & $2$ &$1$ & $1$& $T_2$ & $A_4 \oplus B_6 \oplus B_5 $\\
		$2$ & $1$ &$2$ & $2$&  & $A_2 \oplus A_6 \oplus B_3 $\\
		$2$ & $2$ &$2$ & $1$&  & $A_1 \oplus A_5 \oplus B_3 $\\
		\hline
		$1$ & $1$ &$2$ & $1$ &  & $B_4 \oplus A_2 \oplus A_1 $ \\
		$1$ & $2$ &$2$ & $2$ & $T_3$ & $A_6 \oplus A_5 \oplus B_4 $\\
		$2$ & $1$ &$1$ & $1$ &  & $B_6 \oplus A_3 \oplus B_2 $\\
		$2$ & $2$ &$1$ & $2$&  & $B_5 \oplus A_3 \oplus B_1 $\\
		\hline
		$1$ & $1$ &$2$ & $2$&   & $B_4 \oplus A_2 \oplus B_1 $ \\
		$1$ & $2$ &$2$ & $1$& $T_4$ & $A_6 \oplus B_5 \oplus A_4 $\\
		$2$ & $1$ &$1$ & $2$&  & $B_6 \oplus B_3 \oplus A_2 $\\
		$2$ & $2$ &$1$ & $1$&  & $B_5 \oplus A_3 \oplus A_1 $\\
		
		\hline
	\end{tabular}
	\caption{ Transmission design}
	\label{Tab_Tx_NK2}
\end{table}

 The scheme described in Example~\ref{Ex_simple} for $N=K=2$ can be generalized for any arbitrary $N$ and $K$.
Our scheme achieves the memory-rate pair given in the following theorem. The achievable memory-rate pair in the following theorem is the achievable memory-rate pair using the scheme given in~\cite{Yu18} for $N$ files and $NK$ users and memory $M$.
 \begin{theorem}
 	\label{Thm_genach}
 	For  $N$ files,  $K$ users and memory $M$, the following holds:
 	\begin{align}
 		R^{*p}(N,K,M) \leq 
 	 	\Rt := \frac{{NK \choose KM+1 }-{NK-N \choose KM+1}}{{NK \choose KM}}, \textnormal{ if } M \in \{0, 1/K,2/K,\ldots,N\}.
 		\label{Eq_achvble_regn}
 	\end{align}
 \end{theorem}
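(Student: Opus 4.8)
The plan is to build the private scheme for $(N,K,M)$ on top of the non-private scheme of~\cite{Yu18} for an auxiliary system with $N$ files and $NK$ users, each of memory $M$. The first, routine, step is bookkeeping: because $M\in\{0,1/K,\ldots,N\}$, the parameter $t:=(NK)M/N=KM$ is an integer in $\{0,\ldots,NK\}$, so the scheme of~\cite{Yu18} is well defined for $NK$ users; its symmetric uncoded placement gives each user a cache of size $(M/N)\cdot NF=MF$, and its worst-case rate --- attained whenever the demand vector contains all $N$ files --- is exactly $\frac{\binom{NK}{KM+1}-\binom{NK-N}{KM+1}}{\binom{NK}{KM}}=\Rt$. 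So it suffices to convert this auxiliary scheme into a demand-private scheme for $K$ users without increasing the rate.

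Next I would set up the construction. Split the $NK$ auxiliary (``virtual'') users into $K$ groups of $N$, indexed $(k,j)$ with $k\in[K]$ and $j\in[N]$. Let $S_k$ be uniform on $[N]$, with $S_1,\ldots,S_K$ mutually independent and independent of $(\overline W,\overline D)$; these are the shared randomness. In the placement phase, real user $k$ stores whatever the auxiliary scheme puts in the cache of virtual user $(k,S_k)$, together with $S_k$; this is a valid cache encoding $C_k(S_k,\overline W)$ of size $MF$. In the delivery phase, the server (which knows $\overline W,\overline D,\overline S$) forms a virtual demand vector $\widehat{\mathbf d}\in[N]^{NK}$: inside group $k$ it lets virtual user $(k,S_k)$ request $D_k$ and lets the other $N-1$ virtual users of the group request the other $N-1$ files in a fixed cyclic order determined by $(D_k,S_k)$, so that each group requests all $N$ files. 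Then it broadcasts $X=\big(X_0,\delta\big)$, where $X_0$ is the delivery signal of the scheme of~\cite{Yu18} for files $\overline W$ and demand $\widehat{\mathbf d}$, and $\delta=(\delta_1,\ldots,\delta_K)$ is the ``offset'' vector, $\delta_k$ being the position of $D_k$ relative to $S_k$ in the cyclic order. Two observations: $X$ is a deterministic function of $(\overline W,S_k,D_k,\delta)$ for each $k$ (since $\widehat{\mathbf d}$ itself is a function of $\delta$, and $C_k$ a function of $(S_k,\overline W)$), and $|X_0|=\Rt\,F$ because $\widehat{\mathbf d}$ always contains all $N$ files, while the $\delta$-header costs at most $K\lceil\log_2 N\rceil$ bits, whose rate contribution $\to 0$ as $F\to\infty$.

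For decoding, real user $k$ reads $\delta$ from $X$, reconstructs $\widehat{\mathbf d}$, and --- since it holds exactly the cache of virtual user $(k,S_k)$, which requests $W_{D_k}$ under $\widehat{\mathbf d}$ --- runs the decoder of~\cite{Yu18} for that virtual user to recover $W_{D_k}$; this gives~\eqref{Eq_dec_cond}, and letting $F\to\infty$ yields $R^{*p}(N,K,M)\le\Rt$. For privacy, fix $k$: real user $k$'s observation $(Z_k,X,D_k)$ is a deterministic function of $(\overline W,S_k,D_k,\delta)$, and this tuple is independent of $\ND{k}=(D_{k'})_{k'\ne k}$. Indeed $\delta_k$ is a function of $(S_k,D_k)$, while for each $k'\ne k$ we have $\delta_{k'}=D_{k'}\ominus S_{k'}$ with $S_{k'}$ uniform on $[N]$ and independent of $D_{k'}$, so $\delta_{k'}$ is uniform and independent of $D_{k'}$; all these being jointly independent across $k'$ and of $(\overline W,S_k,D_k)$, the claimed independence follows, hence $I(\ND{k};Z_k,X,D_k)=0$, which is~\eqref{Eq_instant_priv}.

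The main obstacle I expect is the privacy step done honestly: one has to verify that real user $k$'s entire view --- most delicately the broadcast, which a priori is a complicated function of all files and of every $(D_{k'},S_{k'})$ --- depends on the other users' demands only through the offset vector $\delta$, and then invoke the elementary fact that $\delta_{k'}$ leaks nothing about $D_{k'}$ to anyone who does not know $S_{k'}$. The remaining verifications (memory $=MF$, rate $=\Rt$, and that the virtual decoder of~\cite{Yu18} applied to virtual user $(k,S_k)$ indeed outputs $W_{D_k}$) are direct; the $N=K=2$ example is the scheme above with the four distinct transmissions corresponding to the four offset vectors in $\{0,1\}^2$.
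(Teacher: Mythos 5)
Your proposal is correct and is essentially the paper's own construction: both lift the $(N,NK,M)$ scheme of~\cite{Yu18} by grouping the $NK$ virtual users into $K$ blocks of $N$, letting each real user impersonate the virtual user selected by a uniform shared key $S_k$, forcing each block to demand all $N$ files via a cyclic shift, broadcasting only the offset vector ($\bar{c}$ in the paper, $\delta$ in your write-up), and proving privacy from the fact that $D_{k'}\ominus S_{k'}$ is uniform and independent of $D_{k'}$. The only (harmless) difference is that you explicitly account for the vanishing rate of the offset header, which the paper leaves implicit.
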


\begin{definition}
	\label{Def_convx}
Let $\Rtc$ denote the lower convex envelope of the points given in~\eqref{Eq_achvble_regn}. 
\end{definition}

We have the following corollary.
\begin{corollary}
	\label{Coro_tight}
The rate region given by $\Rtc$ is achievable.
\end{corollary}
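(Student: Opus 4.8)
The plan is to obtain the corollary from Theorem~\ref{Thm_genach} by the standard \emph{memory-sharing} (time-sharing) argument, the only non-routine part being the verification that memory-sharing preserves demand privacy. Concretely, I would first establish the following claim: if $(M_1,R_1)$ and $(M_2,R_2)$ are both achievable with demand privacy, then for every $\lambda\in[0,1]$ the pair $\bigl(\lambda M_1+(1-\lambda)M_2,\ \lambda R_1+(1-\lambda)R_2\bigr)$ is achievable with demand privacy. Granting this, fix any $M\in[0,N]$; by definition the value of $\Rtc$ at $M$ has the form $\lambda\,R^{p}(N,K,M_1)+(1-\lambda)\,R^{p}(N,K,M_2)$ for two of the points $M_1,M_2\in\{0,1/K,\dots,N\}$ appearing in~\eqref{Eq_achvble_regn} with $M=\lambda M_1+(1-\lambda)M_2$ (on a one-dimensional memory axis two points of the lower hull suffice), so combining the claim with Theorem~\ref{Thm_genach} yields $\Rto\le\Rtc$, which is the assertion.

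For the claim, let scheme~$j$ ($j=1,2$) achieve $(M_j,R_j)$ with file length $F_j$, shared randomness $\overline{S}^{(j)}=(S_1^{(j)},\dots,S_K^{(j)})$, cache contents $Z_k^{(j)}$ and transmission $X^{(j)}$. Choose a file length $F$ for which $\lambda F$ and $(1-\lambda)F$ are integer multiples of $F_1$ and $F_2$; split each file $W_i$ into a prefix $W_i^{(1)}$ of $\lambda F$ bits and a suffix $W_i^{(2)}$ of $(1-\lambda)F$ bits, run scheme~1 on $\overline{W}^{(1)}$ and scheme~2 on $\overline{W}^{(2)}$ with independently generated shared randomness, and set $Z_k=(Z_k^{(1)},Z_k^{(2)})$ and $X=(X^{(1)},X^{(2)})$; user $k$ decodes $W_{D_k}^{(1)}$ and $W_{D_k}^{(2)}$ separately. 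The cache size is $(\lambda M_1+(1-\lambda)M_2)F$ bits and the rate is $\lambda R_1+(1-\lambda)R_2$, and~\eqref{Eq_dec_cond} is immediate. When $\lambda$ is irrational (so that no such $F$ exists) I would instead approximate $\lambda$ by rationals, use that $\Rto$ is non-increasing in $M$ together with the continuity of $\Rtc$, and pass to the limit; this is routine.

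The substantive step is the privacy condition~\eqref{Eq_instant_priv}, namely $I\bigl(\ND{k};Z_k,X,D_k\bigr)=0$. Writing $V^{(j)}=(Z_k^{(j)},X^{(j)})$, the two structural facts I would use are: (i) the demand vector $\overline{D}$ is common to the two sub-schemes while the triples $(\overline{W}^{(1)},\overline{S}^{(1)})$, $(\overline{W}^{(2)},\overline{S}^{(2)})$, $\overline{D}$ are mutually independent, so conditioned on $\overline{D}$ (equivalently on $(D_k,\ND{k})$) the variables $V^{(1)}$ and $V^{(2)}$ are independent; and (ii) privacy of sub-scheme~$j$ together with independence of the demands gives $p\bigl(v^{(j)}\mid d_k,\Nd{k}\bigr)=p\bigl(v^{(j)}\mid d_k\bigr)$. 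Combining these, $p\bigl(v^{(1)},v^{(2)}\mid d_k,\Nd{k}\bigr)=p\bigl(v^{(1)}\mid d_k\bigr)\,p\bigl(v^{(2)}\mid d_k\bigr)$, which does not depend on $\Nd{k}$; averaging over $\Nd{k}$ shows it equals $p\bigl(v^{(1)},v^{(2)}\mid d_k\bigr)$, so $\ND{k}$ is independent of $(V^{(1)},V^{(2)})$ given $D_k$, and since $\ND{k}$ is independent of $D_k$ we conclude $I(\ND{k};V^{(1)},V^{(2)},D_k)=0$, i.e.~\eqref{Eq_instant_priv}. I expect this conditional-independence computation to be the main subtlety: one cannot simply invoke $I(A;B,C)\le I(A;B)+I(A;C)$, which is false in general, precisely because $\ND{k}$ is shared between the two sub-schemes, so the argument must pass through the explicit factorization above.
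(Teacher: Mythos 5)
Your proof is correct. The paper states Corollary~\ref{Coro_tight} without any proof, treating it as an immediate consequence of Theorem~\ref{Thm_genach} via the standard memory-sharing argument, so your proposal supplies exactly the argument the authors implicitly rely on. The one genuinely non-routine point is the one you identify: privacy is not preserved by a naive subadditivity bound because $\ND{k}$ is common to the two sub-schemes, and your factorization $p\bigl(v^{(1)},v^{(2)}\mid d_k,\Nd{k}\bigr)=p\bigl(v^{(1)}\mid d_k\bigr)\,p\bigl(v^{(2)}\mid d_k\bigr)$ --- using conditional independence of the two views given $\overline{D}$ (from the independence of $(\overline{W}^{(1)},\overline{S}^{(1)})$ and $(\overline{W}^{(2)},\overline{S}^{(2)})$) together with each sub-scheme's privacy --- handles it correctly. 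A minor simplification: since the breakpoints of $\Rtc$ lie at rational memory values in $\{0,1/K,\ldots,N\}$, the irrational-$\lambda$ limiting step is needed only for intermediate $M$, where your monotonicity-plus-continuity argument indeed suffices.
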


\par 
\comment{
The proposed scheme in Theorem~\ref{Thm_genach} which achieves $\Rt$ in \eqref{Eq_achvble_regn} uses the coded caching scheme for $N$ files and $NK$ users under no privacy.
Satisfying the subset of $N^K$ demand vectors stated in the theorem will achieve privacy whereas the rate achieved is the peak rate over the the whole set of $N^{NK}$
demand vectors. We can improve this scheme by designing cache and transmission to satisfy only the demand subset in Theorem~\ref{Thm_genach} and use coded prefetching.
}

\comment{
For $M=0$,  we cannot have any cache randmoness. So from Theorem~\ref{Thm_no_random}, we get $R^{*p}(0) = N$. However, under no privacy the optimal rate $R^{*}(0) = \min (N,K)$. So  for $M=0$ with $N \geq K$,
the optimal transmission rate with demand privacy differs by a multiplicative factor of $N/K$ from that of no  privacy. 
}

In Theorem~\ref{Thm_tightness}, we discuss the tightness of the achievable rate region given in Corollary~\ref{Coro_tight}. 

\begin{theorem} 
	\label{Thm_tightness} 
 $\Rtc$, defined in Definition~\ref{Def_convx}, satisfies the following:
	\begin{enumerate}
		\item For $N \leq K$, it is within a multiplicative gap of 8 from the optimal region, i.e.,  $\frac{\Rtc}{\Rto} \leq 8$. \label{Thm_tight_part1}
		\item For $N > K$, if $M \geq N/K$, then it is within a multiplicative gap of 4 from the optimal region, i.e.,  $\frac{\Rtc}{\Rto} \leq 4$. \label{Thm_tight_part2}
		\item For all $N$ and $K$,  it is optimal  for $M \geq \frac{(NK-1)}{K}$,  i.e.,  $\Rtc = \Rto $. \label{Thm_tight_part3}
	\end{enumerate}
\end{theorem}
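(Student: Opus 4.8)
My approach hinges on one structural fact: imposing demand privacy only shrinks the set of achievable $(M,R)$ pairs, so every converse for ordinary coded caching is a converse here; in particular $\Rto\ge\Rm$, and since deleting users cannot raise the optimal rate, also $\Rto\ge R^{*}(N,\min(N,K),M)$. The plan is thus to extract a clean closed-form upper bound on $\Rtc$ from Theorem~\ref{Thm_genach} and to play it against the cut-set bound of \cite{Maddah14} and the improved converses of \cite{Ghasemi17,Wang18}, in a case analysis over $M$. For the upper bound, the telescoping identity $\binom{NK}{KM+1}-\binom{NK-N}{KM+1}=\sum_{j=0}^{N-1}\binom{NK-1-j}{KM}$ turns \eqref{Eq_achvble_regn} into $\Rt=\frac{1}{\binom{NK}{KM}}\sum_{j=0}^{N-1}\binom{NK-1-j}{KM}=\sum_{j=0}^{N-1}\prod_{i=0}^{j}\frac{NK-KM-i}{NK-i}$; bounding each factor by $1-M/N$ gives, for $M>0$,
\[
\Rt \;\le\; \Big(\tfrac{N}{M}-1\Big)\Big(1-\big(1-\tfrac{M}{N}\big)^{N}\Big)\;\le\;\min\Big\{\,N-M,\ \tfrac{N-M}{M},\ \tfrac{K(N-M)}{KM+1}\,\Big\},
\]
and $\Rt=N$ at $M=0$. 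The leftmost expression is precisely the decentralized (non-private) coded-caching rate for $N$ files with the user count capped at $N$, which by \cite{Ghasemi17,Wang18} is within a constant factor of $R^{*}(N,\min(N,K),M)$. Moving from these corner-point bounds to the convex envelope $\Rtc$ costs only a bounded factor (e.g.\ via the chord from the corner just below $M$ to $(N,0)$).

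For Part~\ref{Thm_tight_part3}: at a corner $M$ (so $KM$ is a nonnegative integer) the $j=0$ term of the sum above equals $1-M/N$, all remaining terms are nonnegative, and they all vanish exactly when $KM\ge NK-1$; hence $\Rt\ge 1-M/N$ at every corner, with equality precisely at the two corners $M=(NK-1)/K$ (where $\Rt=1/(NK)$) and $M=N$ (where $\Rt=0$). Since $1-M/N$ is then an affine function lying below every corner and passing through both of these, $\Rtc=1-M/N$ on all of $[(NK-1)/K,N]$. On the other hand the $s=1$ cut-set bound --- one user must recover all $N$ files from its cache together with the $N$ transmissions matching the $N$ possible values of its demand --- gives $\Rm\ge 1-M/N$, so $\Rto\ge 1-M/N=\Rtc$, and by Corollary~\ref{Coro_tight} we conclude $\Rtc=\Rto$.

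For Parts~\ref{Thm_tight_part1} and~\ref{Thm_tight_part2} I would split the range of $M$. \emph{Large $M$:} for $M\ge N/8$ (resp.\ $M\ge N/4$), combining $\Rtc\le \tfrac{N-M}{M}$ with $\Rto\ge 1-M/N$ gives $\Rtc/\Rto\le \tfrac{N/M-1}{1-M/N}=N/M\le 8$ (resp.\ $\le4$), while the leftover sliver $M\ge(NK-1)/K$ is handled by Part~\ref{Thm_tight_part3}. \emph{Small $M$:} from $\Rtc\le N-M$ and the cut-set bound with $s$ of order $\min\{\sqrt{N/M},\,\min(N,K)\}$ one obtains $\Rto\gtrsim N-M$ up to a constant. \emph{Moderate $M$:} here the cut-set bound is provably too weak --- it is already off by a factor $\Theta(\sqrt{N})$ when $N=K$ and $M\approx1$ --- so I would instead use the improved converse of \cite{Ghasemi17,Wang18} applied to $R^{*}(N,\min(N,K),M)$, which matches the decentralized-rate upper bound on $\Rt$ up to a constant, and then track the constants to reach $8$. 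For Part~\ref{Thm_tight_part2}, the hypothesis $M\ge N/K>1$ (with $N>K$) confines us essentially to the regime where the $s=1$ bound already suffices, which is why the constant improves to $4$.

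The main obstacle is exactly this moderate-$M$ regime in Part~\ref{Thm_tight_part1}: one must select the right improved-converse statement and the right user-count reduction, and then push the product of the factor relating $\Rt$ to the decentralized rate and the factor relating the decentralized rate to $\Rto$ down to $8$. This forces the case boundaries for $M$ to be chosen so that each subregion is governed by whichever bound is locally tightest, rather than by a single uniform comparison. A secondary, bookkeeping-level hurdle is moving cleanly between the corner-point formulas and the continuum of $M$ (between $\Rt$ and $\Rtc$) without leaking extra constant factors.
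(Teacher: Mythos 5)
Your overall architecture (bound $\Rt$ in closed form, then play it against non-private converses, since $\Rm\leq\Rto$) is the same as the paper's, and your Part~\ref{Thm_tight_part3}) argument is essentially identical to the paper's: the two corner points $\left(\frac{NK-1}{K},\frac{1}{NK}\right)$ and $(N,0)$ lie on the cut-set line $R\geq 1-M/N$ of \cite{Maddah14}. Your telescoping identity and the resulting bounds $\Rt\leq\min\{N-M,\;\frac{N-M}{M},\;\frac{K(N-M)}{KM+1}\}$ are correct and are a nice alternative to the paper's route through $R^{\text{MN}}(N,NK,M)$.

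However, there are two genuine gaps. First, in Part~\ref{Thm_tight_part2}) you assert that the hypothesis $M\geq N/K$ ``confines us essentially to the regime where the $s=1$ bound already suffices.'' It does not: combining $\Rtc\leq\frac{K(N-M)}{KM+1}$ with $\Rm\geq 1-\frac{M}{N}$ only yields $\frac{\Rtc}{\Rto}\leq\frac{KN}{KM+1}$, which for $M$ near $N/K$ is of order $K$, not $4$. The whole range $N/K\leq M<N/4$ needs a stronger converse; the paper closes it by showing $\frac{\Rt}{R^{\text{YMA}}(N,K,M)}=\frac{KM+N}{KM+1}\leq 2$ for $M\geq N/K$ (a computation your $\frac{K(N-M)}{KM+1}$ bound would also support, but which you did not make) and then invoking the factor-$2$ optimality of the YMA scheme from \cite{Yu18}, for a product of $4$. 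Second, for Part~\ref{Thm_tight_part1}) you defer the moderate-$M$ regime to ``select the right improved-converse statement \ldots and track the constants to reach $8$''; this is precisely where the content of the proof lies, and it is not done. The paper's concrete path is the factor decomposition $\frac{R^{\text{MN}}(N,NK,M)}{R^{\text{MN}}(N,K,M)}\leq 2$ (a three-case analysis in $M$) together with Lemma~\ref{Lem_bound_lin}, i.e.\ $\frac{R^{\text{MN}}_{\text{lin}}(N,K,M)}{\Rm}\leq 4$, which is itself a three-region argument riding on \cite[Theorem~2]{Ghasemi17} and requires monotonicity and convexity facts about $R^{\text{MN}}_{\text{lin}}$ to pass from corner points to all $M$. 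Relatedly, your passage from corner-point bounds to $\Rtc$ needs more care than a one-line remark: the chord of a convex majorant such as $\frac{N-M}{M}$ lies \emph{above} that function, so you must compare ratios corner-by-corner (e.g.\ via the mediant inequality) rather than claiming $\Rtc(M)\leq\frac{N-M}{M}$ pointwise. Until the moderate-$M$ constants are actually tracked, the claimed bounds of $8$ and $4$ are not established.
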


%
 \section{Proofs of the results}
  \label{sec_proofs}

\subsection{Proof of Theorem~\ref{Thm_genach}}

In the following,  coded caching schemes for $N$ files and $K$ users and memory $M$ with and without demand privacy are called  as $(N,K,M)$-private scheme and $(N,K,M)$-non-private scheme, respectively.
We derive an $(N,K,M)$-private scheme from an $(N,NK,M)$-non-private scheme. In particular, we use the scheme given in~\cite{Yu18} as the  $(N,NK,M)$-non-private scheme to prove the theorem.
For the $(N,NK,M)$-non-private scheme, let $C_k^{\text{np}}$ and $G_k^{\text{np}}$ denote user $k$'s cache encoding function and the decoding function, respectively, where $k=1,\ldots, NK$, and let $E^{\text{np}}$ denote the transmission encoding function. The definitions of  $C_k^{\text{np}}, E^{\text{np}}$  and $G_k^{\text{np}}$ are similar to the definitions given in~\eqref{Def_cach_enc}, ~\eqref{Def_Tx_enc} and ~\eqref{Def_dec}, respectively without any shared randomness.

 First let us consider the procedure of choosing the caches in  $(N,K,M)$-private scheme from  the $(N,NK,M)$-non-private scheme. Let the shared key $S_k$ be distributed as
$S_k \sim \text{Unif}\{[N]\}$. 
Then $k$-th user's cache encoding function $C_k,k=1,\ldots,K$ in $(N,K,M)$-private scheme is given by $C_k = C^{\text{np}}_{(k-1)N+s_k}$ for $S_k = s_k$.
 
 Now we explain the procedure in choosing $E$ from $E^{\text{np}}$. Let $\bar{d}$ and $\bar{d}^{\text{np}}$ denote a demand vector in $(N,K,M)$-private scheme and in the $(N,NK,M)$-non-private scheme, respectively.
 The transmission encoding function $E^{\text{np}}$ in  the $(N,NK,M)$-non-private scheme is determined by the demand vector $\bar{d}^{\text{np}}$. Whereas, in $(N,K,M)$-private scheme, it is determined by the demand vector $\overline{d}$ and the shared keys $\overline{s}$.
 So, we denote $E$ and $E^{\text{np}}$ by $E_{(\bar{d},\bar{s})}$ and $E^{\text{np}}_{\bar{d}^{\text{np}}}$, respectively.
   For given $\overline{S} = \overline{s}$ and $\overline{D} = \overline{d}$,  
    in  $(N,K,M)$-private scheme, the server chooses a $\bar{d}^{\text{np}}$ from
   the $(N,NK,M)$-non-private scheme as follows:
   
  For given $\overline{S} = \overline{s}$ and $\overline{D} = \overline{d}$, let 
  \begin{align}
  c_k := (s_k - d_k)  \mod N \mbox{ for } k = 1, \ldots, K. \label{Def_C}
  \end{align}
  Further, let
  \begin{align*}
  \bar{c}:= (c_1, \ldots, c_K).
  \end{align*}
Now consider the $N$-length vector $q_k, k=1, \ldots, K$ in $[N]^{N}$ obtained by applying $c_k$ right cyclic shifts on $( 1, 2, \ldots, N)$. Then $\bar{d}^{\text{np}}$ is given by $\bar{d}^{\text{np}}= (\overline{q_1},\overline{q_2},\ldots, \overline{q_{K}})$. Here, we can observe that a $\bar{c}$ uniquely determines a $\bar{d}^{\text{np}}$. The encoding function $E_{(\bar{d},\bar{s})}(\overline{w})$  in $(N,K,M)$-private scheme is chosen as $E_{(\bar{d},\bar{s})}(\overline{w}) = E^{\text{np}}_{\bar{c}}(\overline{w})$.

In the $(N,NK,M)$-non-private scheme, the server also communicates the demand vector in the transmission to enable the decoding at each user. Thus, the broadcast message contains the demand vector and a function of the files. For demand vector $ \overline{d}^{\text{np}}$, let $X'_{\overline{d}^{\text{np}}}:= E^{\text{np}}_{\overline{d}^{\text{np}}}(\overline{W})$ in the $(N,NK,M)$-non-private scheme. Thus, the transmitted message $X^{\text{np}}$ in the $(N,NK,M)$-non-private scheme
is given by $X^{\text{np}} = (\overline{D}^{\text{np}}, X^{\text{np}}_{\overline{D}^{\text{np}}})$.
In $(N,K,M)$-private scheme, the transmitted message $X$ is given by  $X:= (\bar{C},X^{\text{np}}_{\bar{C}})$.
%


The decoding at $k$-th user in $(N,K,M)$-private scheme is as follows.
Since $\bar{c}$ is broadcasted, $\bar{d}^{\text{np}}$ 
for which $X^{\text{np}}_{\bar{c}}$ happens in the $(N,NK,M)$-non-private scheme
is known to the user. Then, the user follows the decoding of $(k-1)N+s_k$-th user in the $(N,NK,M)$-non-private scheme because the user has the cache of $(k-1)N+s_k$-th user in the $(N,NK,M)$-non-private scheme. 
For a given $\bar{c}$, the $i$-th user's demand $\bar{d}^{\text{np}}(i)$ in the $(N,NK,M)$-non-private scheme is given by 
\begin{align*}
\bar{d}^{\text{np}}(i) = i \pmod N -c_k \text{ for } (k-1)N < i \leq kN. 
\end{align*}
The decodability follows since  $\bar{d}^{\text{np}}((k-1)N+s_k) $ is same as $\bar{d}(k)$.
We can observe that the rate of transmission in $(N,K,M)$-private scheme is equal to that of the $(N,NK,M)$-non-private scheme.
Next lemma shows that 
$(N,K,M)$-private scheme  also provides the demand privacy.
\begin{lemma}
	\label{Lem_Gen}
$(N,K,M)$-private scheme achieves the demand privacy defined in~\eqref{Eq_instant_priv}.
\end{lemma}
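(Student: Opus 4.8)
The plan is to exhibit the entire decoding view of user $k$ — the triple $(Z_k, X, D_k)$ appearing in \eqref{Eq_instant_priv} — as a deterministic function of a tuple of random variables that is statistically independent of $\ND{k}$; since a deterministic image of a quantity independent of $\ND{k}$ remains independent of $\ND{k}$, the equality $I(\ND{k}; Z_k, X, D_k) = 0$ follows at once. Write $\overline{C}_{-k} := (C_1, \ldots, C_{k-1}, C_{k+1}, \ldots, C_K)$ for the vector of the other users' shifts. First I would check that $(Z_k, X, D_k)$ is a function of $(S_k, \overline{C}_{-k}, D_k, \overline{W})$: by construction $Z_k = (C^{\text{np}}_{(k-1)N+S_k}(\overline{W}), S_k)$ depends only on $(S_k, \overline{W})$; the transmitted symbol is $X = (\overline{C}, X^{\text{np}}_{\overline{C}})$ with $X^{\text{np}}_{\overline{C}} = E^{\text{np}}_{\overline{C}}(\overline{W})$, and since $C_k = (S_k - D_k)\bmod N$ the vector $\overline{C}$ is determined by $(S_k, D_k, \overline{C}_{-k})$, so $X$ is a function of $(S_k, D_k, \overline{C}_{-k}, \overline{W})$; appending $D_k$ gives the claim.

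The core step is then to show that $(S_k, \overline{C}_{-k}, D_k, \overline{W})$ is independent of $\ND{k}$, which I would do by a one-time-pad computation on the shifts. Fix a realization $\ND{k} = \overline{a}$ and evaluate the conditional joint law of the tuple. Since $\overline{W}$ is independent of $(\overline{S}, \overline{D})$, the factor corresponding to $\overline{W}$ separates and carries no dependence on $\overline{a}$. Next, $(S_k, D_k)$ is a function of the variables $S_k, D_k$, which are disjoint from — hence independent of — the variables $(S_j, D_j)_{j\neq k}$ that generate $(\overline{C}_{-k}, \ND{k})$, so $(S_k, D_k)$ contributes the $\overline{a}$-free factor $N^{-2}$. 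Finally, given $\ND{k} = \overline{a}$, each $C_j = (S_j - a_j)\bmod N$ with $j\neq k$ is uniform on $[N]$ and these $C_j$'s are mutually independent because the keys $S_j$ are, so $P(\overline{C}_{-k} = \overline{c}_{-k}\mid \ND{k} = \overline{a}) = N^{-(K-1)}$ independently of $\overline{a}$. Multiplying the factors, the conditional law of $(S_k, \overline{C}_{-k}, D_k, \overline{W})$ given $\ND{k} = \overline{a}$ is $N^{-(K+1)}P(\overline{W} = \overline{w})$ for every $\overline{a}$, which is the asserted independence; composing with the deterministic map of the first paragraph yields $I(\ND{k}; Z_k, X, D_k) = 0$ for all $k$, i.e. \eqref{Eq_instant_priv}.

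The only place the specific construction enters, and hence the one part that needs care, is this last independence: the broadcast leaks the other users' demands exclusively through the masked vector $\overline{C}_{-k}$, which is a uniform additive shift of $\ND{k}$ by the independent keys $(S_j)_{j\neq k}$ and is therefore a perfect one-time pad, while user $k$'s own cache and the file-dependent part of the transmission are governed by $(S_k, D_k, \overline{W})$, all independent of $\ND{k}$. I do not foresee a genuine obstacle beyond bookkeeping which object depends on which variable and invoking the mutual independence of $\{S_k\}$, $\{D_k\}$, $\{W_i\}$ assumed in Section~\ref{sec_problem}; the decoding condition \eqref{Eq_dec_cond} has already been verified in the discussion preceding the lemma and plays no role here.
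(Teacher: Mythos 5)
Your proof is correct and rests on the same key observation as the paper's: the masked vector $\overline{C}$ with $C_j=(S_j-D_j)\bmod N$ acts as a one-time pad, so the other users' demands are perfectly hidden from user $k$'s view $(Z_k,X,D_k)$. The paper packages this as a direct computation showing $\Pr(\ND{k}=\Nd{k}\mid Z_k,X,D_k,S_k,\overline{W})=(1/N)^{K-1}$ for every realization, while you route the same fact through the independent tuple $(S_k,\overline{C}_{-k},D_k,\overline{W})$ and data processing; the two arguments are equivalent in substance.
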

\begin{proof}
To prove this lemma, we  show that for $k \in [K]$,
\begin{align}
I(\ND{k}; Z_k,D_k,\TRS, \overline{W}) = 0,\label{Eq_priv_cond}
\end{align}
which implies~\eqref{Eq_instant_priv}.
For a given $\overline{W} = \overline{w}$,
let $X=(\bar{c},x^{\text{np}}_{\bar{c}}(\overline{w})),  \overline{D}=\overline{d}, S_k = s_k, Z_k = (c_k(s_k,\overline{w}),s_k)$ be such that $\text{Pr}(X=\left(\bar{c},x^{\text{np}}_{\bar{c}}(\overline{w})), Z_k = (c_k(s_k,\overline{w}),s_k), \overline{D}=\overline{d}, \overline{S}=\overline{s}|\overline{W} = \overline{w}\right)>0$. 
Then, we get
\begin{align}
\text{Pr}( \ND{k} & =\Nd{k}|Z_{k}=(c_k(s_k,\overline{w}),s_k),X=(\bar{c},x^{\text{np}}_{\bar{c}}(\overline{w})), D_k=d_k,S_k=s_k, \overline{W} = \overline{w}) \notag \\
&= \text{Pr}\left(\ND{k} = \Nd{k}| D_k=d_k,S_k=s_k, \bar{C}=\bar{c}, \overline{W} = \overline{w}\right)  \label{Eq_priv1}\\
&= \text{Pr}\left(\ND{k} = \Nd{k}| D_k=d_k,S_k=s_k, \bar{C}=\bar{c}\right)  \label{Eq_priv1a}\\
& = \text{Pr}\left(\NS{k} = \Ns{k}\right)\label{Eq_priv2}\\
&   =\left({\frac{1}{N}}\right)^{K-1}. \label{Eq_priv3}
\end{align}
Here, \eqref{Eq_priv1} follows because $x^{\text{np}}_{\bar{c}}(\bar{w})$ is a deterministic function of $\overline{w}$,  and $c_k(s_k,\bar{w})$ is a deterministic function of $\overline{w}$ and $s_k$. 
\eqref{Eq_priv1a} follows since $\overline{W}$ is independent of $\bar{C}, \bar{S}$ and $\overline{D}$.
 \eqref{Eq_priv2} follows due to the fact that 
$\bar{C}$ is a function of $\bar{S}$ and $\overline{D}$ as given in~\eqref{Def_C}.
\eqref{Eq_priv3} follows since $S_k \sim \text{Unif}\{[N]\} $ for all $k \in [K]$.
Thus, we have~\eqref{Eq_priv_cond} which proves the lemma.
\end{proof}
For the  $(N,NK,M)$-non-private scheme, 
 $\frac{{NK \choose KM+1 }-{NK-N \choose KM+1}}{{NK \choose KM}}$ is an  achievable rate for  $M \in \{0, 1/K,2/K,\ldots,N\}$. This shows the achievability of $\Rt$ in~\eqref{Eq_achvble_regn} for $M \in \{0, 1/K,2/K,\ldots,N\}$.

\subsection{Proof of Theorem~\ref{Thm_tightness}}
To prove the theorem, we first give some notations and inequalities.
First recall that  $\Rto $  and $\Rm$ denote the optimal rate with privacy and without privacy as defined in \eqref{Eq_opt_rate_priv} and  \eqref{Eq_opt_rate_nopriv}, respectively. $\Rt$ is the achievable rate given in Theorem~\ref{Thm_genach}. 
Let $R^{\text{YMA}}(N,K,M)$ denote the achievable rate with no privacy using the scheme given in \cite{Yu18} for $N$ files, $K$ users, and memory $M$. For parameter $r_2 = \frac{KM}{N}$, it is given by 
\begin{align}
 R^{\text{YMA}}(N,K,M) & = \frac{{K \choose r_2+1} - {K-\min(N,K) \choose r_2+1}}{{K \choose r_2}}, r_2 \in \{0, 1,\ldots, K\}.  \label{Eq_rate1}
\end{align}
Similarly, let $R^{\text{MN}}(N, K,M)$ denote an achievable rate with no privacy using the scheme in \cite{Maddah14}. For $M\in \{ 0, N/K, 2N/K, \ldots, N \}$, it is given by
\begin{align}
R^{\text{MN}}(N, K,M) &= K\left(1-\frac{M}{N}\right)\min\left(\frac{1}{1+\frac{KM}{N}},\frac{N}{K}\right). \label{Eq_rate2}
\end{align}

In Theorem~\ref{Thm_genach}, we showed that  $\Rt =R^{\text{YMA}}(N,NK,M)$. 
It was shown in~\cite{Yu19} (see Appendix~J) that the lower convex envelop of the points in~\eqref{Eq_rate1}  is same as the piecewise linear interpolation of adjacent  points.  So, $\Rtc$ defined in Definition~\ref{Def_convx} satisfies the following:
\begin{align*}
\Rtc = R^{\text{YMA}}(N,NK,M) \quad \forall M \geq 0.
\end{align*}
Then we have the following inequalities which hold for all $M\geq 0$:
\begin{align}
\Rm  \stackrel{(a)}{\leq} \Rto \stackrel{(b)}{\leq} \Rtc = R^{\text{YMA}}(N,NK,M)
\stackrel{(c)}{\leq} R^{\text{MN}}(N, NK,M), \label{Eq_rate_ineq1}
\end{align}
$(a)$ follows from the fact that the optimal rate required with demand privacy is  larger than that of without privacy, $(b)$ follows since an achievable rate is lower-bounded by the optimal rate, and the inequality $ R^{\text{YMA}}(N,NK,M) \leq R^{\text{MN}}(N, NK,M)$ in
$(c)$ is shown in \cite{Yu18}.



\underline{{\em Proof of part~\ref{Thm_tight_part1})}: ($N \leq K$)}

To prove  $\frac{\Rtc}{\Rto} \leq 8$, we show that if  $N \leq K$, then
\begin{align}
\frac{R^{\text{MN}}(N, NK,M)}{\Rm } \leq 8.  \label{Eq_ratio_bnd1}
\end{align}
Then the result follows from \eqref{Eq_rate_ineq1}. To obtain \eqref{Eq_ratio_bnd1},
we  first show that 
\begin{align}
\frac{R^{\text{MN}}(N, NK,M)}{R^{\text{MN}}(N, K,M)} \leq 2  \text{ for } M\in \{ 0, N/K, 2N/K, \ldots, N \}.\label{Eq_ratio_bnd2}
\end{align}
For $M\in \{ 0, N/K, 2N/K, \ldots, N \}$,
we have
\begin{align}
\frac{R^{\text{MN}}(N, NK,M)}{R^{\text{MN}}(N, K,M)} = \frac{N\min\left(\frac{1}{1+KM},\frac{1}{K}\right)}{\min\left(\frac{1}{1+\frac{KM}{N}},\frac{N}{K}\right)}. \label{ratio}
\end{align}

\noindent \underline {Case 1: $0 \leq M \leq 1-\frac{N}{K}$}

We first find $\min\left(\frac{1}{1+KM},\frac{1}{K}\right)$ and $\min\left(\frac{1}{1+\frac{KM}{N}},\frac{1}{K}\right) $. 
\begin{align*}
\frac{1}{1+KM} & \geq \frac{1}{1+K(1 - N/K)}\\
& = \frac{1}{K-N+1}\\
& > \frac{1}{K} \mbox{ for } N >1.
\end{align*}
So, $\min\left(\frac{1}{1+KM},\frac{1}{K}\right) = \frac{1}{K}$. Further,
\begin{align*}
\frac{1}{1+\frac{KM}{N}} &  \geq \frac{1}{1+\frac{K}{N}(1-N/K)}\\
& = \frac{N}{K}.
\end{align*}
So $\min\left(\frac{1}{1+\frac{KM}{N}},\frac{N}{K}\right) = \frac{N}{K}$. Hence~\eqref{ratio} gives 1.

%
%
%

\noindent \underline{Case 2: $1-\frac{N}{K} \leq M \leq 1-\frac{1}{K}$}

In this case, we get
\begin{align*}
\min\left(\frac{1}{1+KM},\frac{1}{K}\right) = \frac{1}{K} \mbox{ if } 1-\frac{N}{K} \leq M \leq 1-\frac{1}{K}
\end{align*}
 and
\begin{align*}
\min\left(\frac{1}{1+\frac{KM}{N}},\frac{N}{K}\right) =\frac{1}{1+\frac{KM}{N}} \; \mbox{ if }  1-\frac{N}{K} \leq M \leq 1-\frac{1}{K}.
\end{align*}
 Then from~\eqref{ratio},  it follows that
\begin{align*}
\frac{ R^{\text{MN}}(N, NK,M)}{ R^{\text{MN}}(N,K,M)}  & =\frac{N}{K}\left(1+\frac{KM}{N}\right)\\
&=\frac{N}{K}+M \; \mbox{ if } 1-\frac{N}{K} \leq M \leq 1-\frac{1}{K} \\
& \leq 2,
\end{align*}
where the last inequality follows since $\frac{N}{K} \leq 1$ and $M \leq 1$.

\noindent  \underline{Case 3 : $ 1-\frac{1}{K} \leq M \leq N$}

In this case, we obtain
\begin{align*}
\min\left(\frac{1}{1+KM},\frac{1}{K}\right) = \frac{1}{1+KM} \; \mbox{ if } 1-\frac{1}{K} \leq M \leq N
\end{align*}
and
\begin{align*}
\min\left(\frac{1}{1+\frac{KM}{N}},\frac{N}{K}\right) =\frac{1}{1+\frac{KM}{N}}  \; \mbox{ if } 1-\frac{1}{K} \leq M \leq N.
\end{align*}

Then from \eqref{ratio}, we get the following   
\begin{align}
\frac{ R^{\text{MN}}(N, NK,M)}{ R^{\text{MN}}(N,K,M)} & = \frac{N}{1+KM}\left(1+\frac{KM}{N}\right) \notag \\
&=\frac{N+KM}{1+KM}\notag \\
&=\frac{N-1}{1+KM}+1. \label{gap_temp2}  
\end{align}
Further,
\begin{align*}
M \geq 1-\frac{1}{K} & \implies KM \geq K-1,\\
& \implies KM \geq N -1 \; (\mbox{Since $K\geq N$}),\\
& \implies KM +1 \geq N-1,\\
& \implies \frac{N-1}{1+KM} \leq 1.
\end{align*} 
Then from \eqref{gap_temp2}, we get $\frac{ R^{\text{MN}}(N, NK,M)}{ R^{\text{MN}}(N, K,M)}  \leq 2$. Thus, we have \eqref{Eq_ratio_bnd2}.   


Let $R^{\text{MN}}_{\text{lin}}(N,K,M)$ the region obtained  by linearly interpolating the adjacent memory points given in~\eqref{Eq_rate2}. Then it follows from~\eqref{Eq_ratio_bnd2} that 
\begin{align}
\frac{ R^{\text{MN}}(N, NK,M)}{ R^{\text{MN}}_\text{lin}(N,K,M)}  \leq 2 \quad  \forall M\geq 0. \label{Eq_1ratio1}
\end{align}
We have the following lemma.
\begin{lemma}
	\label{Lem_bound_lin}
	For $N\leq K$, the following holds:
\begin{align}
 \frac{R^{\text{MN}}_\text{lin}(N, K,M) }{\Rm } \leq 4 \quad \forall M\geq 0. \label{Eq_1ratio2}
\end{align}
\end{lemma}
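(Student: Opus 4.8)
The plan is to bound $R^{\text{MN}}_{\text{lin}}(N,K,M)$ — the piecewise linear interpolation of the Maddah-Ali–Niesen points $R^{\text{MN}}(N,K,M)$ from \eqref{Eq_rate2} — against the true optimal no-privacy rate $\Rm$, using the converse bounds available in the literature (e.g.\ \cite{Maddah14,Ghasemi17,Wang18}). Since $R^{\text{MN}}_{\text{lin}}$ is a convex function of $M$ and the ratio of two convex, piecewise-linear functions of $M$ is maximized at a breakpoint, it suffices to check \eqref{Eq_1ratio2} at the corner points $M \in \{0, N/K, 2N/K, \ldots, N\}$, where $R^{\text{MN}}_{\text{lin}} = R^{\text{MN}}$; for other $M$ one interpolates using convexity of $R^{\text{MN}}_{\text{lin}}$ on top and a lower bound on $\Rm$ that is itself convex and decreasing. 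So the first step is to reduce to the breakpoints $M = tN/K$, $t \in \{0,1,\ldots,K\}$.

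Next I would split by regime according to which branch of the $\min$ in \eqref{Eq_rate2} is active, mirroring the three-case split already used for \eqref{Eq_ratio_bnd2}. For the coded-multicasting branch, $R^{\text{MN}}(N,K,M) = \frac{K(1-M/N)}{1+KM/N} = \frac{N-M\frac{K}{N}\cdot\frac{N}{1}\cdots}{}$ — more cleanly, with $t = KM/N$ this is $\frac{(K-t)\cdot N}{N(1+t)} \cdot \frac{N}{N}$, i.e.\ roughly $\frac{K}{1+t}$ scaled by $(1-M/N)$; the classical order-optimality argument of Maddah-Ali–Niesen (their factor-12 proof) already shows this is within a constant factor of the cutset/genie lower bound $\Rm \geq \max_{s} \left( s - \frac{s M}{\lfloor N/s\rfloor} \right)$. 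I would reuse exactly that comparison, being careful to track constants so that the constant comes out as $4$ rather than $12$ — this is plausible because we are only comparing $R^{\text{MN}}$ with $\Rm$ here (not with our achievable scheme), and the sharper converses of \cite{Ghasemi17,Wang18} or a tightened cutset choice buy the improvement. For the trivial branch $R^{\text{MN}} = N(1-M/N) = N - M$, which is active for small $M$ when $N \leq K$, I would use the cutset bound with $s=1$ file, giving $\Rm \geq 1 - M/N$, hence a ratio of at most $N$ — this is too weak, so in this regime I would instead invoke the $N$-dimensional cutset bound or note that the relevant $M$-range is narrow and patch it with the convexity interpolation from the neighboring breakpoint where the multicasting branch is active.

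The main obstacle I expect is precisely pinning the constant to $4$ rather than some larger absolute constant: the standard Maddah-Ali–Niesen style argument gives a loose constant, and squeezing it down to $4$ requires choosing the lower-bound parameter $s$ (the number of "virtual users" in the cutset argument) near-optimally as a function of $t = KM/N$, and handling separately the small-$t$ region (where $s \approx 1/M$ type choices are needed) and the large-$t$ region (where $s$ is a small constant like $2$ or $3$). The interpolation step — transferring the breakpoint bound to all $M \geq 0$ — is routine given convexity of $R^{\text{MN}}_{\text{lin}}$ and monotonic convexity of the chosen lower bound on $\Rm$, so I would state it briefly. Combining \eqref{Eq_1ratio1} and \eqref{Eq_1ratio2} through \eqref{Eq_rate_ineq1} then yields $\frac{\Rtc}{\Rto} \leq \frac{R^{\text{MN}}(N,NK,M)}{\Rm} \leq 2 \cdot 4 = 8$, completing part~\ref{Thm_tight_part1}.
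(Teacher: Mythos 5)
Your high-level strategy (compare $R^{\text{MN}}_{\text{lin}}$ against known no-privacy converses, split by memory regime, transfer from grid points to all $M$ by convexity/monotonicity) is the same as the paper's, but the proposal has a genuine gap: it never actually produces the constant $4$, and the places where you defer the difficulty are exactly where the proof lives. The paper does not re-run a cutset optimization or tighten the Maddah-Ali--Niesen factor-12 argument; it imports, verbatim, the three-region analysis from the proof of~\cite[Theorem~2]{Ghasemi17}, whose regions are $0\leq M\leq 1$, $1\leq M\leq N/2$, and $N/2\leq M\leq N$ --- not the regions $[0,1-N/K]$, $[1-N/K,1-1/K]$, $[1-1/K,N]$ determined by the branches of the $\min$ in~\eqref{Eq_rate2}, which is the split you propose to mirror from the proof of~\eqref{Eq_ratio_bnd2}. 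The specific ingredients are: (i) for $0\leq M\leq 1$ and $N\leq K$, the bound $\Rm\geq N/4$ from~\cite{Ghasemi17}; (ii) for $1\leq M\leq N/2$, the function $f_1(M)=N/M-1/2$ with $R^{\text{MN}}(N,K,Nt_0/K)\leq f_1(M)$ and $f_1(M)/\Rm\leq 4$; (iii) for $N/2\leq M\leq N$, the function $f_2(M)=2(1-M/N)$ with $f_2(M)/\Rm\leq 2$. These are combined with an elementary claim that $R^{\text{MN}}_{\text{lin}}$ is non-increasing everywhere and convex on $[1-N/K,N]$, which is what transfers the grid-point bounds to all $M$.

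The most concrete failure point in your plan is the small-$M$ region. You correctly observe that the single-file cutset bound $\Rm\geq 1-M/N$ only yields a ratio of order $N$ there, but your two proposed fixes do not close it: the cutset bound with $s=N$ virtual users gives $\Rm\geq N(1-M)$, which vanishes as $M\to 1$ and therefore cannot certify a factor $4$ throughout $0\leq M\leq 1$ (where $R^{\text{MN}}_{\text{lin}}$ is still of order $N$), and ``patching by interpolation from the neighboring breakpoint'' has nothing to interpolate against until you already have a bound at $M=1$. The statement $\Rm\geq N/4$ on $[0,1]$ is a genuinely stronger, non-cutset converse and must be invoked explicitly. A secondary issue: your reduction ``the ratio of two convex piecewise-linear functions is maximized at a breakpoint'' would at best localize the maximum to breakpoints of \emph{either} function (the lower bound on $\Rm$ is not piecewise linear with the same breakpoints as $R^{\text{MN}}_{\text{lin}}$), so as stated it does not suffice to check only $M\in\{0,N/K,\ldots,N\}$; the paper avoids this by using monotonicity of $R^{\text{MN}}_{\text{lin}}$ against region-wise constant or explicitly monotone comparison functions $N/4$, $f_1$, $f_2$.
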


\begin{proof}
See Appendix~\ref{Sec_append}.
\end{proof}

From~\eqref{Eq_1ratio1} and ~\eqref{Eq_1ratio2}, we obtain~\eqref{Eq_ratio_bnd1}.
 This completes the proof of  part~\ref{Thm_tight_part1}).

\underline{{\em Proof of part~\ref{Thm_tight_part2})}: ($N \geq K$, $M\geq N/K$)}


We first show that 
\begin{align}
\frac{\Rt}{R^{\text{YMA}}(N, K,M)} \leq 2  \text{ for } M\in \{  N/K, 2N/K, \ldots, N \}.\label{Eq_ratio_prt2_bnd2}
\end{align}

Since $\min(N,K) = K$, we get for all $ M\in \{  N/K, 2N/K, \ldots, N \}$, i.e., $r_2= \frac{KM}{N} \in \{1, \ldots, K\}$, 
\begin{align}
R^{\text{YMA}}(N, K,M) & = \frac{{K \choose r_2+1} - {K-\min(N,K) \choose r_2+1}}{{K \choose r_2}} \notag   \\
& = \frac{{K \choose r_2+1}}{{K \choose r_2}}  \notag \\
&= \frac{K-r_2}{r_2+1}. \label{Eq_Rate_YMA}
\end{align}
 Since 
 $\Rt = R^{\text{YMA}}(N, NK,M)$, we get for all $ M\in \{  N/K, 2N/K, \ldots, N \}$, i.e., $r_1= MK  \in \{N, \ldots, NK\}$, 
 \begin{align}
\Rt & =\frac{{NK \choose r_1+1} - {NK-N \choose r_1+1}}{{NK \choose r_1}} \notag \\
& \leq \frac{{NK \choose r_1+1}}{{NK \choose r_1}} \notag \\
& = \frac{NK-r_1}{r_1+1}. \label{rate_32}
\end{align}
 By dividing \eqref{rate_32} by \eqref{Eq_Rate_YMA}, we get
\begin{align}
\frac{\Rt}{ R^{\text{YMA}}(N, K,M)} &\leq \frac{(NK-r_1)(r_2+1)}{(r_1+1)(K-r_2)}. \label{Eq_part2_inq}
\end{align}
For all $ M\in \{  N/K, 2N/K, \ldots, N \}$, it follows from \eqref{Eq_part2_inq} that 
\begin{align}
\frac{\Rt}{R^{\text{YMA}}(N, K,M)} &\leq \frac{N(r_2+1)}{Nr_2+1} \notag \\
&\leq 1+ \frac{N-1}{Nr_2+1}	 \notag\\
& \leq 2, \label{Eq_part2_ineq2}
\end{align}
using $r_2 = \frac{KM}{N} \geq 1$ for $M \geq N/K$.  Thus~\eqref{Eq_ratio_prt2_bnd2} follows. 


Since the lower convex envelop of the points $R^{\text{YMA}}(N,K,M), M\in \{ N/K, 2N/K, \ldots, N \}$ is same as the piecewise linear interpolation of adjacent  points~\cite{Yu18}, we obtain using~\eqref{Eq_part2_ineq2} 
\begin{align}
\frac{\Rtc}{R^{\text{YMA}}(N, K,M)} \leq 2  \quad \forall M \geq N/K. \label{Eq_YMA_bnd1}
\end{align}
Furthermore, it  was shown in~\cite{Yu18} that 
\begin{align}
\frac{R^{\text{YMA}}(N, K,M) }{\Rm } \leq 2  \quad \forall M \geq 0. \label{Eq_YMA_bnd2}
\end{align}
Then from~\eqref{Eq_YMA_bnd1} and~\eqref{Eq_YMA_bnd2}, we obtain
\begin{align*}
\frac{\Rtc}{\Rm} \leq 4 \quad \forall M \geq N/K.
\end{align*}
Since $ \Rm \leq \Rto $, it follows that
\begin{align*}
\frac{\Rtc}{\Rto} \leq 4 \quad \forall M \geq N/K.
\end{align*}

\underline{{\em Proof of part~\ref{Thm_tight_part3})}:}

 For $M=\frac{NK-1}{K}$ in $\Rt$ gives
\begin{align*}
R^p\left(N,K,\frac{NK-1}{K}\right) & = \frac{{NK \choose NK }-{NK-N \choose NK}}{{NK \choose NK-1}} \\
&= \frac{1}{{NK \choose NK-1}} \\
& = \frac{1}{NK}.
\end{align*}
For $M=N$, we have $R^p(N,K,0) = 0$. 
It was shown in~\cite[Theorem~2]{Maddah14} that  for $N$ files, $K$ users and memory $M$ with no privacy constraint, the following holds:
\begin{align}
\Rm \geq 1-\frac{M}{N}. \label{Eq_convrs_line}
\end{align}
It is easy to verify that the points $(\frac{NK-1}{K},\frac{1}{NK})$ and $(N,0)$ lie on the line given by~\eqref{Eq_convrs_line}. This shows that $\Rt = \Rto$ for $M\geq \frac{(NK-1)}{K}$.

\hfill{\rule{2.1mm}{2.1mm}}
\begin{remark}
		We note that for $N \leq K$, our scheme is order optimal within a factor of 4 for the memory regime $M \in [0,1-N/K] \cup [\frac{N}{2},N]$ which can be seen from the proof of Theorem~\ref{Thm_tightness} part~\ref{Thm_tight_part1}).
\end{remark}

\section{Discussion and comparison}
\label{sec_discuss}

In this paper, we studied the case where the number of requests from a user, denoted by $L$, is always one. Without the privacy constraints, this  setup was studied in~\cite{Maddah14}. The recent work~\cite{Wan19} also studied the case of $L>1$ which was introduced  in~\cite{Ji15} under no privacy constraint. In the following, we compare our scheme with that of~\cite{Wan19} for $L=1$.

\begin{enumerate}
	\item For $N \leq K$, we compare the performance of our scheme with the general scheme given in~\cite{Wan19}. Simulation results show that our scheme performs strictly better if $M< N/2$ compared to the  scheme which achieves $R_{\text{com}}$ in~\cite{Wan19}. For example, for $N=5$ and $K=10$, the memory-rate trade-off is plotted for both the schemes in Fig~\ref{fig_comparison_privacy1}. As we can observe our scheme  performs better when $M < N/2$ and it is also order optimal from the converse for all the memory regions. For $M \geq N/2$, we observed that for some memory regimes our schemes performs better while some memory regime, the scheme of~\cite{Wan19} performs better.
	\item An improved scheme was provided in~\cite[Theorem~7]{Wan19}, for a special case where all users have distinct demands which inherently assumes that $N \geq K$.
	It is shown that the achievable rates with the improved scheme $R_{d,new}$ is order optimal within a factor of 4 when $M \geq N/K$. However, in Thoerem~\ref{Thm_tightness} part~\ref{Thm_tight_part2}), we show that our scheme is order optimal within a factor of 4 without the assumption on the demand vector for $N \geq K$ and $M \geq N/K$. In Fig~\ref{fig_comparison_privacy2}, the performance of different schemes are plotted for $N=20$ and $K=10$. 
\end{enumerate}

\begin{figure}[h] 
	\centering
	\includegraphics[scale=0.4]{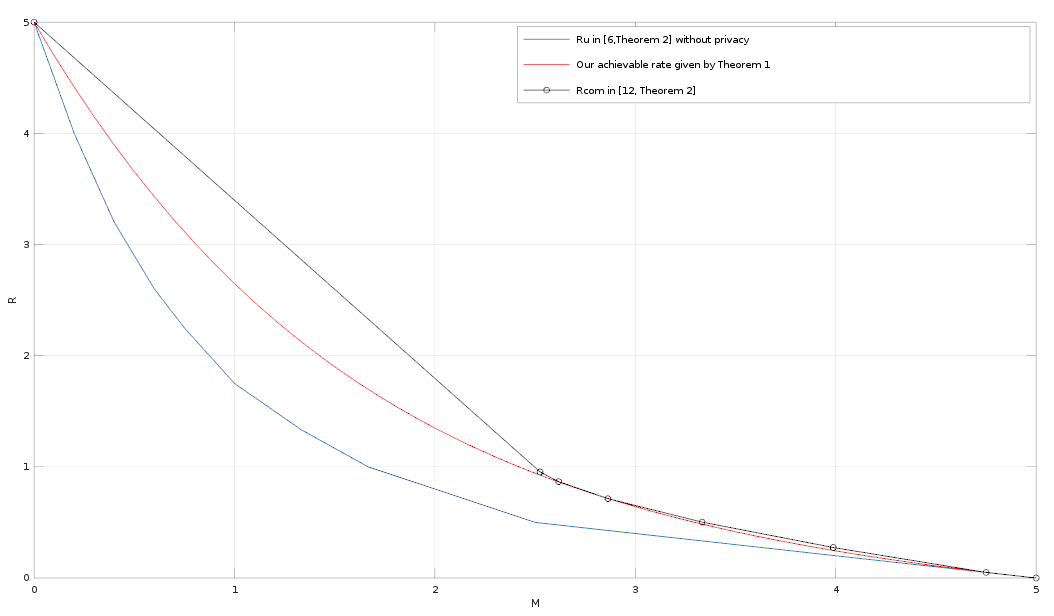}
	\caption{Comparison of the schemes for  N=5, K=10}
	\label{fig_comparison_privacy1}
\end{figure}

\begin{figure}[h] 
	\centering
	\includegraphics[scale=0.4]{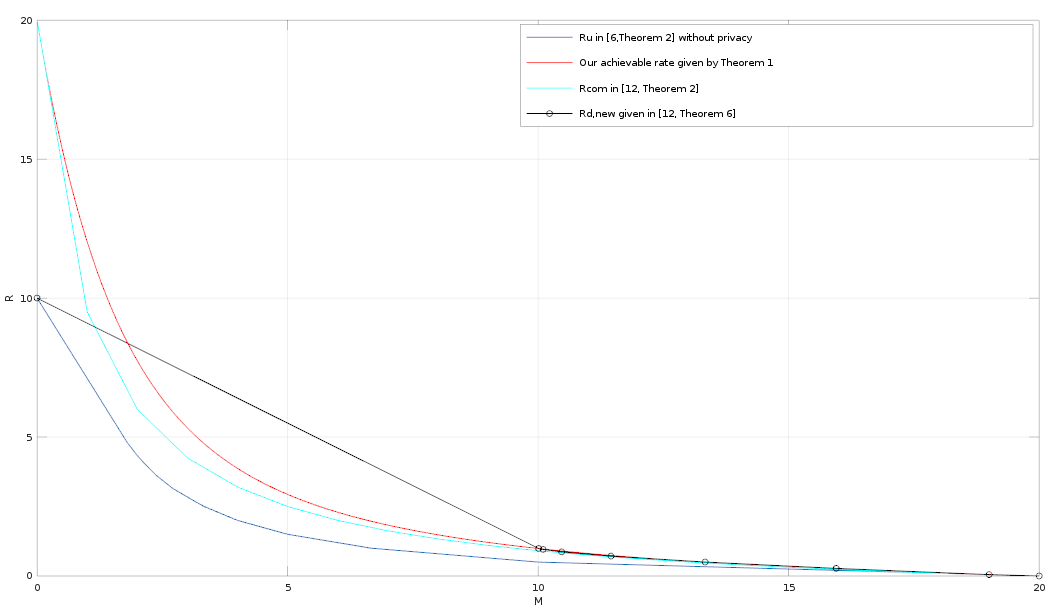}
	\caption{Comparison of the schemes  for N=20, K=10}
	\label{fig_comparison_privacy2}
\end{figure}

\comment{
\begin{figure}[h] 
	\centering
	\includegraphics[scale=0.2]{rate_comp_uncod_scheme_NK10_newest.png}
	\caption{Comparison of rate-memory trade-off  for N=K=10 with and without demand privacy}
	\label{fig_comparison_privacy}
\end{figure}

Fig.~\ref{fig_comparison_privacy}  shows the extra cost incurred due to demand privacy for  $N=K =10$.
$R(M)$ in Theorem~\ref{Thm_genach} is plotted with the best 
known achievable rate-memory pair under no privacy~\cite{Yu18}.  We also show the region given by Theorem~\ref{Thm_no_random}.
In the next theorem, we discuss the tightness of our achievability scheme from the optimal region.

In Example~\ref{Ex_imrpv_achvbl}, we give a DPCC scheme for $N=2,K=3, M=1.5$ achieving a rate of $\frac{1}{4}$, whereas the achievable scheme in
Theorem~\ref{Thm_genach} gives a rate of $\frac{17}{60}$ for the same parameters.
This shows that the region given in Theorem~\ref{Thm_genach} is not tight in general.

\begin{example}
	\label{Ex_imrpv_achvbl}
	We have $N=2, K=3$ and  $M=1.5$, let $W_1=A$ and $W_2=B$ denote the files at the server each composed of F=4 bits $A_1,A_2,A_3,A_4$ and $B_1,B_2,B_3,B_4$ respectively.
	
	For user $i$, $S_i \sim \it{Unif}\{1,2\}$. Hence ,$Z_i \sim \it{Unif}\{Z_{i,1},Z_{i,2}\}$ where $i =1,2$.
	The Cache design is given in Table~\ref{new_point_cache}.
	\begin{table*}[ht]
		\centering
		\begin{tabular}{|c|c|c|c|c|c|c|}
			\hline
			Users & \multicolumn{6}{c|}{Cache content}\\
			\hline
			$Z_{1,1}$ & $A_4$ & $A_1 \oplus A_2$ & $A_2 \oplus A_3$ & $B_4$ & $B_1 \oplus B_2$ & $B_2 \oplus B_3$ \\
			$Z_{1,2}$ & $A_1$ & $A_2$ & $A_3$ & $B_1$ & $B_2$ & $B_3$ \\
			$Z_{2,1}$ & $A_1 \oplus A_3$ & $A_2$ & $A_3 \oplus A_4 $& $B_1 \oplus B_3 $&$ B_2 $& $B_3 \oplus B_4$ \\
			$Z_{2,2}$ & $A_1 $ & $ A_4 $& $A_3$ &$ B_1 $& $B_4 $&$ B_3$ \\
			$Z_{3,1}$ & $A_1 \oplus A_2 $&$ A_3 $& $A_2 \oplus A_4 $& $B_1 \oplus B_2 $& $B_3 $& $B_2 \oplus B_4 $\\
			$Z_{3,2}$ & $A_1$ &$ A_2 $& $A_4 $& $B_1 $& $B_2 $& $B_4$ \\
			\hline
		\end{tabular}
		\caption{Cache design with coded prefetching}
		\label{new_point_cache}
	\end{table*}
	The transmission design for $X_{\bar{c}}$  where $\bar{c}= [(s_1-d_1)mod 2 ,(s_2-d_2)mod 2, (s_3-d_3)mod 2]$ is as follows :\\
	Let $l = 4(c_1-1)+2(c_2-1)+c_3-1$ where $c_1,c_2,c_3 \in [2]$,
	\begin{itemize}
		\item $X_(\bar{c}) = A_{\frac{l+2}{2}} \oplus_{j \in  [4] \setminus \frac{l+2}{2}} B_j$, for $l = 0,2,4,6 $, \\
		\item $X_(\bar{c}) = B_{\frac{l+1}{2}} \oplus_{j\in [4]\setminus \frac{l+1}{2}} A_j $, for $l = 1,3,5,7$. 
	\end{itemize}
	We see that this is equivalent to decoding the demand subset of $NK=6$ demands given in the Theorem~\ref{Thm_genach} and privacy follows.
	The decoding for the 6 $NK$ length vectors denoted by $\bar{c}$ is as follows :
	Each user has 3 bits of information for each of the 2 files.
	To decode its demanded file, only one more bit of information is given by the transmission.
	For decoding, user knows $\bar{c}=(x,y)$. 
	Caches $Z_{1,2},Z_{2,2}, Z_{3,2}$ always have $A_i$ and $B_i$ where $i \in [4]\setminus j$ for some value of $j = 4,3,2$ respectively.   
	Since the transmission consists of all the 4 indices in $[4]$, and these caches always has some 3 of the indices, it can $\oplus$ the 3 bits it has 
	and are present in the transmission, to decode the new bit of information which it doesnt have and that satisfies its corresponding demand. 
	For the remaining caches also, the transmission is adding one new bit of information.
\end{example}
}

%
\begin{appendices}
\section{Proof of Lemma~\ref{Lem_bound_lin}}
\label{Sec_append}
To prove the lemma, we use the converse results given in~\cite[Theorem~2]{Ghasemi17} for coded caching under no privacy. We also use the following result.

\begin{claim}
	\begin{enumerate}
		\item For $N \leq K, \forall M \geq 0$,   $R^{\text{MN}}_{\text{lin}}(N,K,M)$  is monotonically non-increasing.
		\item 	For $N \leq K, 1-N/K \leq M \leq N$, $R^{\text{MN}}_{\text{lin}}(N,K,M)$ is convex.
	\end{enumerate}
	
	\begin{proof}
		We first show part~1).
		For $0 \leq M \leq 1-N/K$,  as we have shown in Case~1  that 
		\begin{align*}
		R^{\text{MN}}_{\text{lin}}(N,K,M) & = N(1-M/N). 
		\end{align*}
		Here it is easy to see that $	R^{\text{MN}}_{\text{lin}}(N,K,M) $ is monotonically non-increasing. 
		Let us consider the regime $ 1-N/K \leq M \leq N$.
		For all $t_0 = \frac{KM}{N} \in\{0,1, \ldots, K\}$ such that $\frac{t_0N}{K} \geq 1-N/K$, we have
		\begin{align}
		R^{\text{MN}}_{\text{lin}}\left(N,K,\frac{t_0N}{K}\right) & = K(1-M/N)\frac{1}{1+\frac{KM}{N}}. 
		\end{align}
		By differentiating  $K(1-M/N)\frac{1}{1+\frac{KM}{N}}$  w.r.t. $M$, we get
		\begin{align*}
		\frac{d}{dM}\left(K\frac{\left(1-\frac{M}{N}\right)}{1+\frac{KM}{N}}\right)
		&=-\frac{K(K+1)}{N}\left(1+\frac{KM}{N}\right)^{-2}\\
		& \leq 0
		\end{align*}
		which shows that 	$R^{\text{MN}}_{\text{lin}}(N,K,M) $ is monotonically non-increasing in $ 1-N/K \leq M \leq N$. To prove part~2), we take the second derivative of $N(1-M/N)\frac{1}{1+\frac{KM}{N}}$ w.r.t. $M$:
		\begin{align*}
		&\frac{d}{dM}\left(-\frac{K(K+1)}{N}\left(1+\frac{KM}{N}\right)^{-2}\right)=2\frac{K^2(K+1)}{N^2}\left(1+\frac{KM}{N}\right)^{-3}
		\end{align*}	
		which proves part~2).
	\end{proof}
\end{claim}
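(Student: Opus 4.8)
The plan is to remove the minimum in~\eqref{Eq_rate2}, read off monotonicity and convexity from the signs of the first two derivatives of the resulting closed form, and then transfer these properties to the piecewise-linear interpolant $R^{\text{MN}}_{\text{lin}}$. First I would locate the branch point of the minimum: a direct computation shows $\tfrac{1}{1+KM/N}\le \tfrac{N}{K}$ exactly when $M\ge 1-N/K$, so setting $M^\star:=1-N/K$ (which is nonnegative because $N\le K$) gives the closed form $g(M)=N-M$ on $[0,M^\star]$ and $g(M)=h(M):=\tfrac{K(1-M/N)}{1+KM/N}$ on $[M^\star,N]$. The two branches agree at $M^\star$ (both equal $N-1+N/K$), so $g$ is continuous on $[0,N]$, and $g(M_t)$ is precisely the ordinate that $R^{\text{MN}}_{\text{lin}}$ interpolates at the grid point $M_t=tN/K$, $t=0,\dots,K$.

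For part~1) I would argue monotonicity of $g$ and then push it through the interpolation. On $[0,M^\star]$, $g(M)=N-M$ is affine with slope $-1$. On $[M^\star,N]$ I would differentiate $h$ once, obtaining $h'(M)=-\tfrac{K(K+1)}{N}(1+KM/N)^{-2}\le 0$; together with continuity at $M^\star$ this makes $g$ non-increasing on all of $[0,N]$. Consequently the sampled values $g(M_0)\ge g(M_1)\ge\cdots\ge g(M_K)$ form a non-increasing sequence, and joining a non-increasing sequence of ordinates by straight segments produces a non-increasing piecewise-linear function, so $R^{\text{MN}}_{\text{lin}}$ is non-increasing. This monotonicity transfer is routine.

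For part~2) I would first establish convexity of the closed form on $[M^\star,N]$: a second differentiation gives $h''(M)=\tfrac{2K^2(K+1)}{N^2}(1+KM/N)^{-3}\ge 0$. Note that $g$ is convex only on $[M^\star,N]$ and \emph{not} across $M^\star$, since the left branch has slope $-1$ whereas $h'_+(M^\star)=-N(K+1)/K\le -1$ is a downward kink; this is exactly why the statement is restricted to $M\ge 1-N/K$. It then remains to transfer convexity to the interpolant, using that the slope of $R^{\text{MN}}_{\text{lin}}$ on a grid interval equals the secant slope of $h$ there and that secant slopes of a convex function are non-decreasing. I expect the main obstacle to be precisely this transfer at the branch switch: $M^\star$ is generically not a grid point, so the grid interval containing $M^\star$ straddles the switch, with its left endpoint in the affine branch (slope $-1$) and its right endpoint in the convex branch, and convexity of $h$ on $[M^\star,N]$ alone does not control that straddling secant slope. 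I would handle this interval separately, decomposing its secant into the affine part of slope $-1$ and the convex part (whose average slope is bounded by the next interval's secant slope via convexity of $h$), and then check whether the resulting ordering of consecutive secant slopes is preserved; this branch-switch interval is the delicate step on which the whole argument turns.
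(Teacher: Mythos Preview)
Your approach is essentially the paper's: identify the branch point $M^\star=1-N/K$, write the closed form as $N-M$ on $[0,M^\star]$ and $h(M)=K(1-M/N)/(1+KM/N)$ on $[M^\star,N]$, and read off monotonicity from $h'\le 0$ and convexity from $h''\ge 0$. The derivative computations you give are exactly those in the paper's proof.

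Where you differ is in rigor. The paper conflates the smooth closed form $g$ with its piecewise-linear interpolant $R^{\text{MN}}_{\text{lin}}$: it simply computes $h'$ and $h''$ and declares the result for $R^{\text{MN}}_{\text{lin}}$, without ever discussing the transfer to the sampled interpolant or the possibility that $M^\star$ fails to be a grid point. Your explicit transfer argument for part~1) (monotone samples $\Rightarrow$ monotone interpolant) is a genuine improvement over the paper. For part~2), the branch-switch subtlety you flag---that the straddling grid interval has its left endpoint on the affine branch and its right endpoint on $h$, so convexity of $h$ alone does not immediately order that secant against the next one---is a real issue that the paper does not even acknowledge; the paper's proof of part~2) consists solely of the second-derivative line. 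So your proposal is at least as complete as the paper's, and your hesitation at the straddling interval reflects a gap in the paper's argument rather than a defect in your plan. If you want to close it, note that since $h(M_{t^*+1})\le \ell(M_{t^*+1})$ and $g(M_{t^*})=\ell(M_{t^*})$, one can bound the straddling secant slope directly against $s_{t^*+1}$ using convexity of $h$ on all of $[0,N]$ (not just $[M^\star,N]$), which is available since $h''\ge 0$ everywhere; but the paper does none of this.
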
	

Now we consider the three memory regions studied in the proof of~\cite[Theorem~2]{Ghasemi17}. For $N \leq K$, the three regions are as follows: 

\underline{Region I: $0 \leq M \leq 1 $: }

Since $R^{\text{MN}}_\text{lin}(N, K,0) = N$, and also that $R^{\text{MN}}_\text{lin}(N, K,M) $ is monotonically non-increasing in $M$, we get
\begin{align*}
R^{\text{MN}}_\text{lin}(N, K,M) \leq N.
\end{align*}

In this regime, it was shown~\cite[Theorem~2]{Ghasemi17} that 
$\Rm \geq N/4$. Then, we have 
\begin{align*}
\frac{R^{\text{MN}}_\text{lin}(N, K,M) }{\Rm } \leq 4 \quad \mbox{ for } 0 \leq M  \leq 1.
\end{align*}

\underline{Region II: $1 \leq M \leq N/2 $: }

Let  $f_1(M): = \frac{N}{M} - \frac{1}{2}$. 
For $t_0 = \lfloor{\frac{KM}{N}}\rfloor$ and $\frac{Nt_0}{K} \leq M \leq \frac{N(t_0+1)}{K}$, 
it was shown~\cite[Theorem~2]{Ghasemi17}  that 
\begin{align}
R^{\text{MN}}_{\text{c}}(N,K,M) \leq R^{\text{MN}}\left(N,K,\frac{Nt_0}{K}\right) \leq f_1(M). \label{Eq_2regio1}
\end{align}

Since $R^{\text{MN}}_{\text{lin}}(N,K,\frac{Nt_0}{K}) = R^{\text{MN}}(N,K,\frac{Nt_0}{K})$ and also that
$R^{\text{MN}}_{\text{lin}}(N,K,M)$ is non-increasing, we get
\begin{align}
R^{\text{MN}}_\text{lin}(N, K,M) \leq  R^{\text{MN}}\left(N,K,\frac{Nt_0}{K}\right). \label{Eq_2regio2}
\end{align}
It was also shown~\cite[Theorem~2]{Ghasemi17}  that 
\begin{align}
\frac{f_1(M)}{\Rm} \leq 4. \label{Eq_2regio3}
\end{align}
From~\eqref{Eq_2regio1}~\eqref{Eq_2regio2} and~\eqref{Eq_2regio3}, it follows that 
\begin{align}
\frac{R^{\text{MN}}_\text{lin}(N, K,M) }{\Rm } \leq 4.
\end{align}

\underline{Region III: $N/2 \leq M \leq N $: }

Let $f_2(M): = 2(1-M/N)$. 
For $t_0 = \lfloor{\frac{K}{2}}\rfloor$ and for all $M \geq \frac{N(t_0)}{K}$, 
it was shown~\cite[Theorem~2]{Ghasemi17}  that for $\lambda = \frac{1-M/N}{1-t_0/K}$,
\begin{align}
R^{\text{MN}}_{\text{c}}(N,K,M) \leq \lambda R^{\text{MN}}\left(N,K,\frac{Nt_0}{K}\right) \leq f_2(M), \label{Eq_3regio1}
\end{align}
and also that
\begin{align}
\frac{f_2(M)}{\Rm} \leq 2. \label{Eq_3regio2}
\end{align}
By definition of $R^{\text{MN}}_{\text{lin}}$, we have,
\begin{align}
R^{\text{MN}}_{\text{lin}}\left(N,K,\frac{Nt_0}{K}\right) & = R^{\text{MN}}\left(N,K,\frac{Nt_0}{K}\right).\label{small_eq1}
\end{align}
We also have
\begin{align}
R^{\text{MN}}_{\text{lin}}(N,K,N) & = R^{\text{MN}}(N,K,N) \notag \\
& = 0. \label{small_eq2}
\end{align}
Then from the convexity of $R^{\text{MN}}_{\text{lin}}(N,K,M) $ in this regime, it follows that 
\begin{align}
R^{\text{MN}}_\text{lin}(N, K,M) \leq \lambda  R^{\text{MN}}_\text{lin}\left(N,K,\frac{Nt_0}{K}\right)+ (1-\lambda)R^{\text{MN}}_{\text{lin}}(N,K,N) \label{small_eq3}
\end{align}
From ~\eqref{small_eq1}~\eqref{small_eq2} and~\eqref{small_eq3}, we get
\begin{align}
R^{\text{MN}}_\text{lin}(N, K,M) \leq \lambda  R^{\text{MN}}\left(N,K,\frac{Nt_0}{K}\right).\label{Eq_3regio3}
\end{align}
From~\eqref{Eq_3regio1}~\eqref{Eq_3regio2} and~\eqref{Eq_3regio3}, we obtain
\begin{align}
\frac{R^{\text{MN}}_\text{lin}(N, K,M) }{\Rm } \leq 2.
\end{align}
This completes the proof of the lemma.

\end{appendices}

 \bibliographystyle{IEEEtran}
 \bibliography{Bibliography.bib}





\end{document}